\newcommand{\bigo}{\mathcal{O}}
\renewcommand{\mod}{\text{ mod }}
\newcommand{\E}{\mbox{\rm \bf{E}}}
\renewcommand{\Pr}{\mbox{\rm \bf{Pr}}}
\newcommand{\Var}{\mbox{\rm \bf{Var}}}
\newcommand{\Xac}{X_{(a,c)}}
\newcommand{\mcA}{\mathcal{A}}
\newcommand{\mcB}{\mathcal{B}}
\newcommand{\mcC}{\mathcal{C}}
\begin{document}
\mainmatter              
\title{Better size estimation for sparse matrix products\thanks{This work was supported by the Danish National Research Foundation, as part of the project ``Scalable Query Evaluation in Relational Database Systems''. A shorter version of this paper has been accepted for presentation at the 14th Intl.~Workshop on Randomization and Computation - RANDOM 2010.}}
\author{Rasmus Resen Amossen \and Andrea Campagna \and Rasmus Pagh}
\institute{IT University of Copenhagen, DK-2300 Copenhagen S, Denmark\\
\email{\{resen,acam,pagh\}@itu.dk}}
\maketitle              

\begin{abstract}
We consider the problem of doing fast and reliable estimation of the number $z$ of non-zero entries in a sparse boolean matrix product.
This problem has applications in databases and computer algebra.

Let $n$ denote the total number of non-zero entries in the input matrices.
We show how to compute a $1\pm \varepsilon$ approximation of $z$ (with small probability of error) in expected time $\bigo(n)$ for any $\varepsilon > 4/\sqrt[4]{z}$.
The previously best estimation algorithm, due to Cohen (JCSS~1997), uses time $\bigo(n/\varepsilon^2)$.
We also present a variant using $\bigo(\text{sort}(n))$ I/Os in expectation in the cache-oblivious model.

In contrast to these results, the currently best algorithms for computing a sparse boolean matrix product use time $\omega(n^{4/3})$ (resp.~$\omega(n^{4/3}/B)$ I/Os), even if the result matrix has only $z=\bigo(n)$ nonzero entries.

Our algorithm combines the size estimation technique of Bar-Yossef et al.~(RANDOM 2002) with a particular class of pairwise independent hash functions that allows the sketch of a set of the form $\mcA\times \mcC$ to be computed in expected time $\bigo(|\mcA|+|\mcC|)$ and $\bigo(\text{sort}(|\mcA|+|\mcC|))$ I/Os.

We then describe how sampling can be used to maintain (independent) sketches of matrices that allow estimation to be performed in time $o(n)$ if $z$ is sufficiently large. 
This gives a simpler alternative to the sketching technique of Ganguly et al.~(PODS 2005), and matches a space lower bound shown in that paper.

Finally, we present experiments on real-world data sets that show the accuracy of both our methods to be significantly better than the worst-case analysis predicts.
\end{abstract}



\section{Introduction}

In this paper we will consider a $d\times d$ boolean matrix as the subset of $[d]\times [d]$ corresponding to the nonzero entries.
The product of two matrices $R_1$ and $R_2$ contains $(i,k)$ if and only if there exists $j$ such that $(i,j)\in R_1$ and $(j,k)\in R_2$.
The matrix product can also be expressed using basic operators of relational algebra: $R_1\Join R_2$ denotes the set of tuples $(i,j,k)$ where $(i,j)\in R_1$ and $(j,k)\in R_2$, and the projection operator $\pi$ can be used to compute the tuples $(i,k)$ where there exists a tuple of the form $(i,\cdot,k)$ in $R_1\Join R_2$.
Since most of our applications are in database systems we will primarily use the notation  of relational algebra.

We consider the following question: given relations $R_1$ and $R_2$ with schemas $(a,b)$ and $(b,c)$, estimate the number $z$ of {\em distinct\/} tuples in the relation $Z=\pi_{ac}(R_1\Join R_2)$.
This problem has been referred to in the literature as \emph{join-project} or \emph{join-distinct}\footnote{Readers familiar with the database literature may notice that we consider projections that return a set, i.e., that projection is duplicate eliminating. We also observe that any equi-join followed by a projection can be reduced to the case above, having two variables in each relation and projecting away the single join attribute. Thus, there is no loss of generality in considering this minimal case.}.
We define $n_1=|R_1|$, $n_2=|R_2|$, and $n=n_1+n_2$.
As observed above, the join-project problem is equivalent to the problem of estimating the number of non-zero entries in the product of two boolean matrices, having $n_1$ and $n_2$ non-zero entries, respectively.

In recent years there has been several papers presenting new algorithms for sparse matrix multiplication~\cite{amossenpagh09joinproject,conf/esa/Lingas09,mm_yuster_zwick}. 
In particular, these algorithms can be used to implement boolean matrix multiplication.
However, the proposed algorithms all have substantially superlinear time complexity in the input size $n$: On worst-case inputs they require time $\omega(n^{4/3})$, even when $z=\bigo(n)$.

In an influential work, Cohen~\cite{JCSS::Cohen1997} presented an estimation algorithm that, for any constant error probability $\delta > 0$, and any $\varepsilon > 0$, can compute a $1\pm \varepsilon$ approximation of $z=|Z|$ in time $\bigo(n/\varepsilon^2)$.
Cohen's algorithm applies to the more general problem of computing the size of the transitive closure of a graph. 

Our main result is that in the special case of sparse matrix product size estimation, we can improve this to expected time $\bigo(n)$ for $\varepsilon > 4/\sqrt[4]{z}$. This means that we have a linear time algorithm for relative error where Cohen's algorithm would use time $\bigo(n\sqrt{z})$.

\paragraph{Approach.}
To build intuition on the size estimation question, consider the sets 
$\mcA_j=\{ i \; | (i,j)\in R_1\}$ and $\mcC_j=\{ k \; | (j,k)\in R_2\}$.
By definition, $Z = \bigcup_j \mcA_j \times \mcC_j$. 
The size of $Z$ depends crucially on the extent of overlap among the sets $\{ \mcA_j \times \mcC_j \}_j$. 
However, the total size of these sets may be much larger than both input and output (see~\cite{amossenpagh09joinproject}), so any approach that explicitly processes them is unattractive.

The starting point for our improved estimation algorithm is a well-known algorithm for estimating the number of distinct elements in a data streaming context~\cite{baryossef02counting}.
(We remark that the idea underlying this algorithm is similar to that of Cohen~\cite{JCSS::Cohen1997}.) Our main insight is that this algorithm can be extended such that a set of the form $\mcA_j \times \mcC_j$ can be added to the sketch in expected time $\bigo(|\mcA_j|+|\mcC_j|)$, i.e., without explicitly generating all pairs.
The idea is to use a hash function that is particularly well suited for the purpose: sufficiently structured to make hash values easy to handle algorithmically, and sufficiently random to make the analysis of sketching accuracy go through.

\subsection{Motivation}

Cohen~\cite{journals/jco/Cohen98} investigated the use of the size estimation technique in sparse matrix computations.
In particular, it can be used to find the optimal order of multiplying sparse matrices, and in memory allocation for sparse matrix computations.

In addition, we are motivated by applications in database systems, where size estimation is an important part of query optimization.
Examples of database queries that correspond to boolean matrix products are:
\begin{itemize}
\item A query that computes all pairs of people in a social network with a distance 2 connection (``possible friends'').
\item A query to compute all director-actor pairs who have done at least one movie together.
\item In a business database with information on orders, and a categorization of products into types, compute the relation that contains a tuple $(c,p)$ if customer $c$ has made an order for a product of type $p$.
\end{itemize}

As a final example, we consider a fundamental data mining task.
Given a list of sets, the famous Apriori data mining algorithm~\cite{apriori} finds frequent item pairs by counting the number occurrences of item pairs where each single element is frequent.
So if $R_1=R_2$ denotes the relationship between high-support (i.e., frequent) items and sets in which they occur, $Z$ is exactly the pairs of frequent items, and the number of distinct items in $Z$ determines the space usage of Apriori.
Since Apriori may be very time consuming, it is of interest to establish whether sufficient space is available before choosing the support threshold and running the algorithm.


\medskip

\subsection{Further related work}

\subsubsection{JD sketch.} %


Ganguly et al.~\cite{ganguly05aggregateestimation} previously considered techniques that compute a data structure (a \emph{sketch}) for $R_1$ and $R_2$ (individually), such that the two sketches suffice to compute an approximation of $z$. 

Define $n_a = |\{ i \; | \; \exists j . (i,j)\in R_1 \}|$ and $n_c = |\{ k \; | \; \exists j . (j,k)\in R_2 \}|$.
Ganguly et al.~show that for any constant $c$ and any $\beta$, a sketching method that returns a $c$-approximation with probability $\Omega(1)$ whenever $z\geq\beta$ must, on a worst-case input, use expected space
$$\Omega(\min(n_1+n_2,n_a n_c (n_1/n_a + n_2/n_c) / \beta)) = \Omega(\min(n_1+n_2,(n_1 n_c + n_2 n_a) / \beta)) \text{ bits.}$$
The lower bound proof applies to the case where $n_1=n_2$, $n_a=n_c$, and $z < n_a + n_c$.
We note that~\cite{ganguly05aggregateestimation} claims a stronger lower bound, but their proof does not establish a lower bound above $n_1+n_2$ bits.
Ganguly et al.~present a sketch whose worst-case space usage matches the lower bound times polylogarithmic factors (while not stated in~\cite{ganguly05aggregateestimation}, the trivial sketch that stores the whole input can be used to nearly match the first term in the minimum).

In Section~\ref{sec:sketch} we analyze a simple sketch, previously considered in other contexts by Gibbons~\cite{Gibbons:2001:DSH} and Ganguly and Saha~\cite{conf/isaac/GangulyS06}.
It similarly matches the above worst-case bound, but the exact space usage is incomparable to that of~\cite{ganguly05aggregateestimation}. 

The focus of~\cite{ganguly05aggregateestimation} is on space usage, and so the time for updating sketches, and for computing the estimate from two sketches, is not discussed in the paper.
Looking at the data structure description we see that the update time grows linearly with the quantity $s_1$, which is $\Omega(n)$ in the worst case.
Also, the sketch uses a number of summary data structures that are accessed in a random fashion, meaning that the worst case number of I/Os is at least $\Omega(n)$ {\em unless\/} the sketch fits internal memory.
By the above lower bound we see that keeping the sketch in internal memory is not feasible in general.
In contrast, the sketch we consider allows collection and combination of sketches to be done efficiently in linear time and I/O.

\subsubsection{Distinct elements and distinct paths estimation.} %
Our work is related in terms of techniques to papers on estimating the number of distinct items in a data stream (see~\cite{baryossef02counting} and its references).
However, our basic estimation algorithm does not work in a general streaming model, since it crucially needs the ability to access all tuples with a particular value on the join attribute together.

Ganguly and Saha~\cite{conf/isaac/GangulyS06} consider the problem of estimating the number of distinct vertex pairs connected by a length-2 path in a graph whose edges are given as a data stream of $n$ edges.
This corresponds to size estimation for the special case of {\em squaring\/} a matrix (or self-join in database terminology).
It is shown that space $\sqrt{n}$ is required, and that space roughly $\bigo(n^{3/4})$ suffices for constant $\varepsilon$ (unless there are close to $n$ connected components).
The estimation itself is a join-distinct size estimation of a sample of the input having size no smaller than $\bigo(n^{3/4}/\varepsilon^2)$.
Using Cohen's estimation algorithm this would require time $\bigo(n^{3/4}/\varepsilon^4)$, so this is $\bigo(n)$ time only for $\varepsilon > 1/\sqrt[16]{n}$. 

\subsubsection{Join synopses.} %
Acharya et al.~\cite{Acharya:1999:JSA} proposed so-called {\em join synopses\/} that provide a uniform sample of the result of a join.
While this can be used to estimate result sizes of a variety of operations, it does not seem to yield efficient estimates of join-project sizes. 
The reason is that a standard uniform sample is known to be inefficient for estimating the number of distinct values~\cite{PODS00*268}.
In addition, Acharya et al.~assume the presence of a foreign-key relationship, i.e., that each tuple has at most one matching tuple in the other table(s), which is also known as a \emph{snow flake} schema. 
Our method has no such restriction.

\subsubsection{Distinct sampling.} %
Gibbons~\cite{Gibbons:2001:DSH} considered different samples that can be extracted by a scan over the input, and proposed {\em distinct samples}, which offer much better guarantees with respect to estimating the number of distinct values in query results.
Gibbons shows that this technique applies to single relations, and to foreign key joins where the join result has the same number of tuples as one of the relations.
In Section~\ref{sec:sketch} we show that the distinct samples, with suitable settings of parameters, can often be used in our setting to get an accurate estimate of $z=|Z|$.
The processing of a pair of samples to produce the estimate consists of running the efficient estimation algorithm of Section~\ref{sec:algo} on the samples, meaning that this is time- and I/O-efficient.


\section{Our algorithm}\label{sec:algo}

The task is to estimate the size $z$ of $Z=\pi_{ac}(R_1\Join R_2)$.
We may assume that attribute values are $\bigo(\log n)$-bits integers, since any domain can be mapped into this one using hashing, without changing the join result size with high probability.
When discussing I/O bounds, $B$ is the number of such integers that fits in a disk block.
In linear expected time (by hashing) or sort$(n)$ I/Os we can cluster the relations according to the value of the join attribute $b$.
By initially eliminating input tuples that do not have any matching tuples in the other relation we may assume without loss of generality that $z\geq n/2$.

In what follows, $k$ is a positive integer parameter that determines the space usage and accuracy of our method.
The technique used is to compute the $k$th smallest value $v$ of a hash function $h(x,y)$, for $(x,y)\in Z$.
Analogously to the result by Bar-Yossef et al.~\cite{baryossef02counting} we can then use $\tilde{z} = k/v$ as an estimator for $z$.

Our main building block is an efficient iteration over all tuples $(x,\cdot,y)\in R_1\Join R_2$ for which $h(x,y)$ is smaller than a carefully chosen threshold $p$, and is therefore a candidate for being among the $k$ smallest hash values.
The essence of our result lies in how the pairs being output by this iteration are computed in expected linear time.
We also introduce a new buffering trick to update the sketch in expected amortized $\bigo(1)$ time per pair.
In a nutshell, each time $k$ new elements have been retrieved, they are merged using a linear time selection procedure with the previous $k$ smallest values to produce a new (unordered) list of the $k$ smallest values.


\begin{theorem}\label{thm:main}
Let $R_1(a,b)$ and $R_2(b,c)$ be relations 
with $n$ tuples in total, and define $z=|\pi_{ac}(R_1\Join R_2)|$.
Let $\varepsilon$, $0 < \varepsilon < \tfrac{1}{4}$ be given.
There are algorithms that run in expected $\bigo(n)$ time on a RAM, and expected $\bigo(\text{sort}(n))$ I/Os in the cache-oblivious model, and output a number $\tilde{z}$ such that for $k = 9/\varepsilon^2$:
\begin{itemize}
	\item $\Pr[(1-\varepsilon)z < \tilde{z} < (1+\varepsilon)z] \geq 2/3$ when $z>k^2$, and
	\item $\Pr[\tilde{z} < (1+\varepsilon) k^2] \geq 2/3$ when $z\leq k^2$.
\end{itemize}
\end{theorem}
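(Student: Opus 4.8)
The plan is to prove the two halves of the statement separately: the \emph{accuracy} of the estimator $\tilde z = k/v$, which is a Chebyshev argument needing only pairwise independence of $h$ on the distinct pairs of $Z$, and the \emph{efficiency}, which is the genuinely new part and consists of producing the $k$th smallest hash value $v$ over the implicitly defined set $Z=\bigcup_j \mcA_j\times\mcC_j$ in expected linear time and $\bigo(\text{sort}(n))$ I/Os. I would first dispatch accuracy and then build the algorithm.

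For accuracy I would treat $h$ as sending the $z$ distinct pairs of $Z$ to values that are uniform and pairwise independent, after choosing the range $\gg z^2$ so that distinct pairs receive distinct values except with negligible probability and counting hash values below a cutoff really counts distinct pairs. The event $\tilde z\ge(1+\varepsilon)z$ is exactly the event that at least $k$ of the $z$ pairs hash below $t_+=k/((1+\varepsilon)z)$; letting $Y_+$ count these, $\E[Y_+]=k/(1+\varepsilon)$ and pairwise independence gives $\Var[Y_+]\le\E[Y_+]$, so Chebyshev bounds $\Pr[Y_+\ge k]$ by at most $(1+\varepsilon)/(k\varepsilon^2)$. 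Symmetrically, $\tilde z\le(1-\varepsilon)z$ forces fewer than $k$ pairs below $t_-=k/((1-\varepsilon)z)$, a failure probability at most $(1-\varepsilon)/(k\varepsilon^2)$. Substituting $k=9/\varepsilon^2$ makes the two tails at most $(1+\varepsilon)/9$ and $(1-\varepsilon)/9$, whose sum is $2/9<1/3$, proving the first bullet by a union bound. For $z\le k^2$ only the one-sided event $\tilde z\ge(1+\varepsilon)k^2$ must be excluded; it again reads ``at least $k$ pairs below a cutoff'' whose mean is $\le k/(1+\varepsilon)$, so the same estimate applies, and when $z<k$ the algorithm can simply report the exact count, which trivially lies below $(1+\varepsilon)k^2$.

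For efficiency I would, after clustering the input by the join attribute in $\bigo(n)$ time and $\bigo(\text{sort}(n))$ I/Os and discarding non-matching tuples, realise the main building block: for each join value $j$, use the structured hash $h$ to enumerate precisely the pairs of $\mcA_j\times\mcC_j$ whose hash lies below a global threshold $p$, in time $\bigo(|\mcA_j|+|\mcC_j|+c_j)$ with $c_j$ the number emitted and \emph{without} forming the $|\mcA_j|\,|\mcC_j|$ pairs. I would fix $p$ by geometric search, doubling it and restarting whenever fewer than $k$ candidates survive, so that the final $p$ both admits $\ge k$ distinct values below it (so $v$ is recovered correctly from the candidates) and keeps the expected number of emitted pairs $\bigo(n)$; a geometric sum then makes the total work a constant times the work at the final $p$. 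Summing $|\mcA_j|+|\mcC_j|$ over $j$ is $\bigo(n)$, and the buffering trick, accumulating emitted pairs and calling linear-time selection once per $k$ of them to refresh the current $k$ smallest, charges $\bigo(1)$ amortised per candidate, giving expected $\bigo(n)$ overall. For the cache-oblivious bound I would replace the clustering, the per-$j$ enumeration, and the repeated selection by scan, merge, and sort primitives running in $\bigo(\text{sort}(n))$ I/Os.

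The main obstacle is precisely this building block. The hash family has to be structured enough that the sub-threshold pairs of a product can be listed output-sensitively, yet random enough that the pairwise-independent Chebyshev analysis above still holds; reconciling these two demands is the crux. Coupled to it is the accounting of emissions: a pair with many witnesses $j$ is generated once per witness, so I must argue that at the threshold selected by the geometric search the expected total $\sum_j c_j$ is still linear while $\ge k$ \emph{distinct} hash values remain below $p$. This is exactly where the normalisation making $z=\Omega(n)$ and the output-sensitivity of the enumeration do the real work, and where I expect the analysis to be most delicate.
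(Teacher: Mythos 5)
Your accuracy analysis (the two Chebyshev tails for $\tilde z>(1+\varepsilon)z$ and $\tilde z<(1-\varepsilon)z$ under pairwise independence, with $k=9/\varepsilon^2$) matches the paper's. But the two places you yourself flag as "the crux" and "most delicate" are precisely the content of the theorem, and your plan does not supply them. First, the hash family: you postulate a function that is pairwise independent yet allows the sub-threshold pairs of $\mcA_i\times\mcC_i$ to be listed output-sensitively, but you give no construction, and without one there is no proof. The paper's construction is $h(x,y)=(h_1(x)-h_2(y))\bmod 1$ with $h_1,h_2$ independently drawn from a pairwise independent family; sorting $\mcA_i$ by $h_1$ and $\mcC_i$ by $h_2$ (bucket sort, expected linear time) makes $h$ monotone modulo $1$ along rows and columns, so in each column the entries below $p$ form a contiguous run starting at the column minimum, and the column minima advance monotonically across columns, giving $\bigo(|\mcA_i|+|\mcC_i|+\text{output})$ per group. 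This specific additive structure is the paper's main insight and cannot be treated as a black box.

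Second, your mechanism for choosing $p$ --- doubling and restarting until at least $k$ candidates survive --- would break the $\bigo(n)$ bound. The enumeration cost at threshold $p$ is $\Theta(p\sum_i|\mcA_i||\mcC_i|)$ (pairs counted once per witness $i$), while the number of distinct survivors is about $pz$; when $z\leq k^2$, and in particular when $z<k$, no threshold yields $k$ distinct survivors, so your loop drives $p$ to $1$ and enumerates the entire join, of size $\sum_i|\mcA_i||\mcC_i|$, which can be $\Theta(n\sqrt{k})=\omega(n)$ (e.g.\ $n/(2\sqrt{k})$ join values each with $\mcA_i=\mcC_i$ a fixed set of size $\sqrt{k}$). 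The paper never raises $p$: it fixes $p=\min(1/k,\,k/\max_i|\mcA_i||\mcC_i|)$ up front, proves $p|\mcA_i||\mcC_i|\leq\max(|\mcA_i|,|\mcC_i|)$ for every $i$ (hence $\bigo(n)$ total work), only decreases $p$ as better $k$th-smallest values are found, and accepts that the run may end with fewer than $k$ samples. That possibility becomes a \emph{third} failure event --- "fewer than $k$ values below $p$ although $z>k^2$" --- bounded by another Chebyshev application using $p\geq\min(k/(2z),1/k)$ (which holds since $\max_i|\mcA_i||\mcC_i|\leq z$), contributing $1/18$ to the error budget $1/6+1/9+1/18=1/3$. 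Your budget accounts for only two events; once you cap $p$ at $1/k$ to rescue the running time (as you must), you need this third bound, and the geometric search buys you nothing.
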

Observe that for $\varepsilon > 4/\sqrt[4]{z}$ we will be in the first case, and get the desired $1\pm\varepsilon$ approximation with probability $2/3$.
The error probability can be reduced from $1/3$ to $\delta$ by the standard technique of doing $\bigo(\log(1/\delta))$ runs and taking the median (the analysis follows from a Chernoff bound).
We remark that this can be done in such a way that the $\bigo(\log(1/\delta))$ factor affects only the RAM running time and not the number of I/Os. For constant relative error $\varepsilon > 0$ we have the following result:
\begin{theorem}\label{thm:const-error}
In the setting of Theorem~\ref{thm:main}, if $\varepsilon$ is constant there are algorithms that run in expected $\bigo(n)$ time on a RAM, and expected $\bigo(\text{sort}(n))$ I/Os in the cache-oblivious model, that output $\tilde{z}$ such that $\Pr[(1-\varepsilon)z < \tilde{z} < (1+\varepsilon)z] = 1 - \bigo(1/\sqrt{n})$.
\end{theorem}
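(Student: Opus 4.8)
The plan is to avoid the $\bigo(\log(1/\delta))$ overhead of the median trick (which would cost a $\log n$ factor on the RAM and thereby destroy the $\bigo(n)$ bound) and instead obtain the stronger failure probability from a \emph{single} run of the algorithm of Theorem~\ref{thm:main}, executed with a larger value of the parameter $k$. Since $\varepsilon$ is constant, the requirement $k \ge 9/\varepsilon^2$ is itself a constant, so for large $n$ I am free to set $k = \Theta(\sqrt{n})$, say $k = \lfloor \sqrt{n/2}\, \rfloor$. Because the preprocessing guarantees $z \ge n/2$, this choice satisfies $k \le \sqrt{z}$, i.e.\ $z \ge k^2$, so we remain in the favourable first case of Theorem~\ref{thm:main} and $\tilde z = k/v$ is the relevant estimator.

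For the accuracy I would revisit the concentration argument underlying Theorem~\ref{thm:main} and extract its dependence on $k$, rather than using the packaged $2/3$ bound. The estimator $\tilde z = k/v$ formed from the $k$-th smallest hash value has relative variance $\Var(\tilde z)/z^2 = \bigo(1/k)$ (this is exactly what gives constant success probability at $k = \bigo(1/\varepsilon^2)$), so Chebyshev's inequality yields $\Pr[\,|\tilde z - z| \ge \varepsilon z\,] = \bigo(1/(\varepsilon^2 k))$. Substituting $k = \Theta(\sqrt n)$ and using that $\varepsilon$ is a constant gives failure probability $\bigo(1/\sqrt n)$, as required.

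The step I expect to be the main obstacle is verifying that the expected running time remains $\bigo(n)$ once $k$ grows to $\Theta(\sqrt n)$: a priori the work is proportional to the number of join tuples $(x,\cdot,y)$ whose hash falls below the threshold $p \approx k/z$, and this count is $\bigo(p\,N)$ with $N = \sum_j |\mcA_j|\,|\mcC_j|$ possibly far larger than $z$. The key inequality is $N \le \tfrac{n}{2}\sqrt{z}$, obtained from $|\mcA_j|\,|\mcC_j| \le \sqrt{z}\,\tfrac{|\mcA_j|+|\mcC_j|}{2}$ (using $|\mcA_j|\,|\mcC_j| \le z$ together with AM--GM) and summing over $j$. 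With $p = \bigo(k/z)$ this bounds the expected number of generated pairs by $\bigo( (k/z)\, n\sqrt{z} ) = \bigo(k\,n/\sqrt z) = \bigo(n)$, where the last step uses $k \le \sqrt z$. The same estimate controls the external-memory cost: the $\bigo(n)$ generated pairs are sorted and passed through the selection/buffering procedure, costing $\bigo(\text{sort}(n))$ I/Os.

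Finally I would observe that this is precisely the regime in which the running-time analysis of Theorem~\ref{thm:main} already operates (its first case is $k \le \sqrt z$), so the only genuinely new ingredients are the explicit Chebyshev bound as a function of $k$ and the choice $k = \Theta(\sqrt n)$. In particular no median of independent repetitions is used, which is exactly what keeps the RAM time at $\bigo(n)$ and the I/O cost at $\bigo(\text{sort}(n))$ while driving the failure probability down to $\bigo(1/\sqrt n)$.
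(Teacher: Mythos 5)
Your proposal is correct and matches the paper's own (very brief) argument: run the algorithm once with $k=\Theta(\sqrt{n})$, note that each of the three Chebyshev bounds in the error analysis is $\bigo(1/(\varepsilon^2 k))=\bigo(1/k)$ for constant $\varepsilon$, and observe that the linear running-time analysis (Lemma~\ref{lemma:p_acmax}) is independent of $k$. Your extra global bound $\sum_j |\mcA_j||\mcC_j| \le \tfrac{n}{2}\sqrt{z}$ is a fine alternative justification of the time bound, and your choice $k=\lfloor\sqrt{n/2}\rfloor$ is actually slightly more careful than the paper's $k=\sqrt{n}$ in guaranteeing $z\ge k^2$ after the preprocessing step.
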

The error probability can be reduced to $n^{-c}$ for any desired constant $c$ by running the algorithms $\bigo(c)$ times, and taking the median as above.

\subsection{Finding pairs}
For $\mcB=\pi_b(R_1) \cup \pi_b(R_2)$ and each $i\in \mcB$ let $\mcA_i = \pi_a(\sigma_{b=i}(R_1))$ and $\mcC_i = \pi_c(\sigma_{b=i}(R_2))$.
We would like to efficiently iterate over all pairs $(x,y)\in \mcA_i \times \mcC_i$, $i\in \mcB$, for which $h(x,y)$ is smaller than a threshold $p$.
This is done as follows (see Al\-go\-rithm~\ref{code:pair} for pseudocode).

For a set $U$, let $h_1, h_2:U\rightarrow [0;1]$ be hash functions chosen independently at random from a pairwise independent family, and define $h:U\times U \rightarrow [0;1]$ by\footnote{We observe that this is different from the ``composable hash functions'' used by Ganguly et al.~\cite{ganguly05aggregateestimation}.}
$$h(x,y) = (h_1(x) - h_2(y)) \mod 1.$$
It is easy to show that $h$ is also a pairwise independent hash function --- a property we will utilize later.
Now, conceptually arrange the values of $h(x,y)$ in an $|\mcA_i|\times |\mcC_i|$ matrix, and order the rows by increasing values of $h_1(x)$, and the columns by increasing values of $h_2(y)$.
Then the values of $h(x,y)$ will decrease (modulo~1) from left to right, and increase (modulo~1) from top to bottom.

For each $i\in \mcB$, we traverse the corresponding $|\mcA_i|\times |\mcC_i|$ matrix by visiting the columns from left to right, and in each column $t$ finding the row $\bar{s}$ with the smallest value of $h(x_{\bar{s}},y_t)$.
Values smaller than $p$ in that column will be found in rows subsequent to $\bar{s}$.
When all such values have been output, the search proceeds in column $t+1$.
Notice, that if $h(x_{\bar{s}}, y_t)$ was the minimum value in column $t$, then the minimum value in column $t+1$ is found by increasing $\bar{s}$ until $h(x_{\bar{s}},y_{t+1}) < h(x_{(\bar{s}-1)\mod |\mcA_i|},y_{t+1})$.
We observe that the algorithm is robust to decreasing the value of the threshold $p$ during execution, in the sense that the algorithm still outputs all pairs with hash value at most $p$.
\begin{algorithm*}[!h]
\caption{ Pseudocode for the size estimator.}\label{code:pair}
\begin{algorithmic}[1]
\Procedure{DisItems}{$p,\varepsilon$}
  \State $k \gets \lceil 9/\varepsilon^2 \rceil$
  \State $F\gets \emptyset$
  \For{$i\in \mcB$}
    \State $x \leftarrow \mcA_i$ sorted according to $h_1$-value
    \State $y \leftarrow \mcC_i$ sorted according to $h_2$-value
    \State $\bar{s}\leftarrow 1$
    \For{$t:=1$ to $|\mcC_i|$}
      \While{$h(x_{\bar{s}},y_t) > h(x_{(\bar{s}-1)\mod |\mcA_i|},y_t)$}\Comment{Find $\bar{s}$ s.t.~$h(x_{\bar{s}},y_t)$ is min.}\label{sbar-loop-start}
        \State $\bar{s}\leftarrow (\bar{s}+1)\mod |\mcA_i|$
      \EndWhile\label{sbar-loop-end}
      \State $s\leftarrow \bar{s}$
      \While{$h(x_s,y_t) < p$}\Comment{Find all $s$ where $h(x_s,y_t) < p$}\label{inner-loop-start}
        \State $F \gets F\cup \{(x_{s},y_t)\}$\label{buffer-start}
        \If{$|F| = k$}\Comment{Buffer filled, find smallest hash values in $S\cup F$}
           \State $(p,S) \gets \textrm{\textsc{Combine}}(S,F)$
           \State $F \gets \emptyset$
        \EndIf\label{buffer-end}
        \State $s\leftarrow (s+1)\mod |\mcA_i|$
      \EndWhile\label{inner-loop-end}
    \EndFor
  \EndFor
  \State $(p,S) \gets \textrm{\textsc{Combine}}(S,F)$
  \If{$|S| = k$}
    \State \Return ``$\tilde{z}=\frac{k}{p}$ and $\tilde{z} \in \left[( 1\pm \varepsilon)z \right]$ with probability 2/3'' \label{exact-return}
  \Else 
    \State \Return ``$\tilde{z}=k^2,\;z \leq k^2$ with probability 2/3''
  \EndIf
\EndProcedure
\medskip
\Procedure{Combine}{$S,F$}
  \State $v \gets \textrm{\textsc{Rank}} (h(S) \cup h(F),k)$ \Comment{$\textsc{Rank}(\cdot,k)$ returns the $k$th smallest value}
  \State $S \gets \{x \in S \cup F | h(x) \leq v\}$
  \State \Return $(v,S)$
\EndProcedure
\end{algorithmic}
\end{algorithm*}

\subsection{Estimating the size}
While finding the relevant pairs, we will use a technique that allows us to maintain the $k$ smallest hash values in an unordered buffer instead of using a heap data structure (lines \ref{buffer-start}--\ref{buffer-end} in Algorithm~\ref{code:pair}).
In this way we are able to maintain the $k$ smallest hash values in constant amortized time per insertion in the buffer, eliminating the $\log k$ factor implied by the heap data structure.

Let $S$ and $F$ be two unordered sets containing, respectively, the $k$ smallest hash values seen so far (all, of course, smaller than $p$), and the latest up to $k$ elements seen.
We avoid duplicates in $S$ and $F$ (i.e., the sets are kept disjoint) by using a simple hash table to check for membership before insertion.
Whenever $|F|=k$ the two sets $S$ and $F$ are combined in order to obtain a new sketch $S$.
This is done by finding the median of $S \cup F$, which takes $\bigo(k)$ time using either deterministic methods (see~\cite{zwick_med}) or more practical randomized ones~\cite{Hoare_med}.

At each iteration the current $k$th smallest value in $S$ may be smaller than the initial value $p$, and we use this as a better substitute for the initial value of~$p$.
However, in the analysis below we will upper bound both the running time and the error probability using the initial threshold value $p$.

\subsection{Time analysis}

We split the time analysis into two parts.
One part accounts for iterations of the inner while loop in lines \ref{inner-loop-start}--\ref{inner-loop-end}, and the other part accounts for everything else.
We first consider the RAM model, and then outline the analysis in the cache-oblivious model.

\paragraph{Inner while loop.}

Observe that for each iteration, one pair $(x_s,y_t)$ is added to~$F$ (if it is not already there).
For each $t\in \mcC_i$, $p |\mcA_i|$ elements are expected to be added since each pair $(x_s,y_t)$ is added with probability $p$.
This means that the expected total number of iterations is $\bigo(p |\mcA_i| |\mcC_i|)$.
Each call to {\sc Combine} costs time $\bigo(k)$, but we notice that there must be at least $k$ iterations between successive calls, since the size of $F$ must go from $0$ to $k$.
Inserting a new value into $F$ costs $\bigo(1)$ since the set is not sorted.
Hence, the total cost of the inner loop is $\bigo(p |\mcA_i| |\mcC_i|)$.

\paragraph{Remaining cost.}

Consider the processing of a single $i\in \mcB$ in Algorithm~\ref{code:pair}.
The initial sorting of hash values can be done with bucket sort requiring expected time $\bigo(|\mcA_i| + |\mcC_i|)$ since the numbers sorted are pairwise independent (by the same analysis as for hashing with chaining). 

For the iteration in lines \ref{sbar-loop-start}--\ref{sbar-loop-end} observe that $h(x_{\bar{s}},y_t)$ is monotone modulo 1, and we have at most a total of $2|\mcA_i|$ increments of $\bar{s}$ among all $t\in \mcC_i$.
Thus, the total number of iterations is $\bigo(|\mcA_i|)$, and the total cost for each $i\in \mcB$ is $\bigo(|\mcA_i|+|\mcC_i|)$.

The time for the final call to {\sc Combine} is dominated by the preceding cost of constructing $S$ and $F$.

\paragraph{I/O efficient variant.}

As for I/O efficiency, notice that a direct implementation of Algorithm~\ref{code:pair} may cause a linear number of cache misses if $\mcA_i$ and $\mcC_i$ do not fit into internal memory.
To get an I/O-efficient variant we use a cache-oblivious sorting algorithm, sorting $R_1$ according to $(b,h_1(a))$, and $R_2$ according to $(b,h_2(c))$, such that the sorting steps for each $i\in \mcB$ is replaced by one global sorting step.

The rest of the algorithm works directly in a cache-oblivious setting.
To see this, notice that it suffices to keep in internal memory the two input blocks that are closest to each of the pointers $s$, $t$, and $\bar{s}$.
The cache-oblivious model assumes the cache to behave in an optimal fashion, so also in this model there will be $\Omega(B)$ operations between cache misses, and $\bigo(n/B)$ I/Os, expected, in total.

\begin{lemma}\label{lem:time}
Suppose $R_1(a,b)$ and $R_2(b,c)$ are relations with $n$ tuples in total.
Let $p>0$ and $\varepsilon > 0$ be given.
Then Algorithm~\ref{code:pair} runs in expected $\bigo(n+\sum_i p |\mcA_i| |\mcC_i|)$ time and $\bigo(1/\varepsilon^2)$ space on a RAM, and can be modified to use expected $\bigo(\text{sort}(n))$ I/Os in the cache-oblivious model.
\end{lemma}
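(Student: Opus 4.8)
The plan is to bound the running time by accounting separately for the inner while loop (lines~\ref{inner-loop-start}--\ref{inner-loop-end}) and for all remaining work, following the preceding time analysis; the lemma then follows by summing over $i \in \mcB$ and collecting the space and I/O bounds. First I would pin down the one probabilistic fact the whole argument rests on: the value $h(x,y) = (h_1(x)-h_2(y)) \mod 1$ is marginally uniform on $[0,1]$. Fixing $y$, the value $h_1(x)$ is uniform, and reducing modulo $1$ preserves uniformity, so $\Pr[h(x_s,y_t)<p]=p$ for every fixed pair.

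For the inner loop this immediately gives, for each fixed column $t$, an expected $p|\mcA_i|$ rows output, and summing over the $|\mcC_i|$ columns yields expected $p|\mcA_i||\mcC_i|$ iterations. Each iteration does an $\bigo(1)$ insertion into the unsorted buffer $F$ together with an $\bigo(1)$ hash-table membership test to suppress duplicates. Each call to \textsc{Combine} costs $\bigo(k)$ by linear-time median selection, but at least $k$ iterations must elapse between successive calls since $|F|$ grows from $0$ to $k$; the selection cost therefore amortizes to $\bigo(1)$ per iteration. Summing over all $i$ bounds the total inner-loop cost by $\bigo(\sum_i p|\mcA_i||\mcC_i|)$.

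For the remaining work I would argue three things per $i \in \mcB$. The two bucket sorts of the $h_1$- and $h_2$-values run in expected time $\bigo(|\mcA_i|+|\mcC_i|)$, using that pairwise independent hash values behave like the keys in hashing with chaining so that expected bucket sizes are bounded. The search for the column minimum in lines~\ref{sbar-loop-start}--\ref{sbar-loop-end} advances $\bar{s}$ monotonically modulo $1$, making at most $2|\mcA_i|$ total increments as $t$ ranges over all columns, so this loop contributes only $\bigo(|\mcA_i|)$ across all columns; and the final \textsc{Combine} is dominated by the cost of having built $S$ and $F$. The two sort terms telescope over $i$ to $\bigo(n)$, which combined with the inner-loop bound gives expected time $\bigo(n+\sum_i p|\mcA_i||\mcC_i|)$. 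The space bound is immediate, since $S$ and $F$ each hold at most $k=\bigo(1/\varepsilon^2)$ values and the two hash functions take $\bigo(1)$ words.

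For the cache-oblivious bound I would replace the per-$i$ sorting by one global sort of $R_1$ on $(b,h_1(a))$ and of $R_2$ on $(b,h_2(c))$, costing $\bigo(\text{sort}(n))$ I/Os. The remaining scan only needs, for each of the monotonically advancing pointers $s$, $t$, and $\bar{s}$, the two input blocks straddling that pointer, so under the optimal cache there are $\Omega(B)$ operations between misses and hence $\bigo(n/B)$ further I/Os, dominated by the sort. The main obstacle is making the expectation in the inner-loop count fully rigorous: it relies simultaneously on the marginal uniformity of $h$ and on the pairwise independence invoked for the bucket-sort analysis, so I would state the uniformity claim cleanly before using $\Pr[h(x_s,y_t)<p]=p$ and confirm that pairwise independence (rather than full independence) suffices for both uses.
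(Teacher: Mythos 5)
Your proposal is correct and follows essentially the same decomposition as the paper: separate accounting of the inner while loop (expected $p|\mcA_i||\mcC_i|$ iterations via the marginal uniformity of $h$, with \textsc{Combine} amortized over the $\geq k$ intervening iterations) versus the remaining per-$i$ cost (expected linear bucket sort by pairwise independence, $\bigo(|\mcA_i|)$ total increments of $\bar{s}$), plus the identical global-sort replacement for the cache-oblivious bound. The only additions are minor explicit elaborations (stating the uniformity of $h$, the hash-table duplicate check) that the paper leaves implicit.
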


\subsubsection{Choice of threshold $p$.}

We would like a value of $p$ that ensures the expected processing time is $\bigo(n)$.
At the same time $p$ should be large enough that we expect to reach line~\ref{exact-return} where an exact estimate is returned (except possibly in the case where $z$ is small).

\begin{lemma}\label{lemma:p_acmax}
Let $j\in \mcB$ satisfy $|\mcA_i||\mcC_i| \leq |\mcA_j||\mcC_j|$ for all $i\in \mcB$.
Then $p=\min(1/k, k/(|\mcA_j||\mcC_j|))$ gives an expected $\bigo(n)$ running time for Algorithm~\ref{code:pair}.
\end{lemma}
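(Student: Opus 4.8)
The plan is to reduce everything to the running-time expression from Lemma~\ref{lem:time}, which states that the cost is $\bigo(n + \sum_i p|\mcA_i||\mcC_i|)$. Hence it suffices to show that the chosen $p$ makes the second term $\bigo(n)$; in fact I would aim for the clean bound $\sum_i p|\mcA_i||\mcC_i| \le n/2$, which immediately yields total expected time $\bigo(n+n/2)=\bigo(n)$.

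First I would isolate the single quantity on which the choice of $p$ depends, namely the maximum product $M = |\mcA_j||\mcC_j| = \max_{i\in\mcB} |\mcA_i||\mcC_i|$, so that $p = \min(1/k,\, k/M)$. The crucial observation about this choice is that $p\sqrt{M}\le 1$: writing $a = \sqrt{M}/k$ we have $p\sqrt{M} = \min(a,\,1/a)$, and $\min(a,1/a)\le 1$ for every $a>0$. This is the only place where the particular form $\min(1/k,k/M)$ is used, and it is exactly what balances the two regimes (namely $M$ small, where $p=1/k$, versus $M$ large, where $p=k/M$).

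The main step --- and the only real obstacle --- is to control the sum $\sum_i |\mcA_i||\mcC_i|$, which, as noted in the introduction, can be far larger than $n$ in general, but becomes manageable once we use that every term is at most $M$. The idea I would use is to split each product asymmetrically: since $|\mcA_i||\mcC_i|\le M$ we have $\sqrt{|\mcA_i||\mcC_i|}\le \sqrt{M}$, while the arithmetic--geometric mean inequality gives $\sqrt{|\mcA_i||\mcC_i|}\le (|\mcA_i|+|\mcC_i|)/2$. Multiplying these two estimates yields
$$|\mcA_i||\mcC_i| \le \frac{\sqrt{M}}{2}\,(|\mcA_i|+|\mcC_i|),$$
and summing over $i\in\mcB$, using $\sum_i |\mcA_i| = n_1$ and $\sum_i |\mcC_i| = n_2$, gives $\sum_i |\mcA_i||\mcC_i| \le \tfrac{\sqrt{M}}{2}(n_1+n_2) = \tfrac{\sqrt{M}}{2}n$.

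Finally I would combine the two pieces:
$$\sum_i p\,|\mcA_i||\mcC_i| = p\sum_i |\mcA_i||\mcC_i| \le p\cdot\frac{\sqrt{M}}{2}\,n = \frac{p\sqrt{M}}{2}\,n \le \frac{n}{2},$$
where the last inequality uses $p\sqrt{M}\le 1$. Substituting into Lemma~\ref{lem:time} gives the claimed expected $\bigo(n)$ running time. I expect the AM--GM splitting of $|\mcA_i||\mcC_i|$ to be the key idea; everything else is the elementary fact $\min(a,1/a)\le 1$ together with routine bookkeeping.
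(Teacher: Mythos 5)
Your proof is correct, and it follows the same overall strategy as the paper's: reduce to Lemma~\ref{lem:time} and show that the chosen $p$ makes $\sum_i p|\mcA_i||\mcC_i|$ linear in $n$ by bounding each term against a quantity that sums to $\bigo(n)$. The only difference is in execution: the paper splits into two cases according to which branch of the $\min$ defining $p$ is active (i.e., whether $|\mcA_i||\mcC_i|\ge k^2$) and shows in each case that $p|\mcA_i||\mcC_i|\le\max(|\mcA_i|,|\mcC_i|)$, whereas you avoid the case split by noting $p\sqrt{M}\le 1$ and combining it with the AM--GM bound $\sqrt{|\mcA_i||\mcC_i|}\le(|\mcA_i|+|\mcC_i|)/2$ to get the per-term bound $p|\mcA_i||\mcC_i|\le(|\mcA_i|+|\mcC_i|)/2$, which is at least as strong. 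Both arguments ultimately rest on the same fact --- that this choice of $p$ gives $p\,|\mcA_i||\mcC_i|\le\sqrt{|\mcA_i||\mcC_i|}$ --- so the difference is one of packaging; your unified version is slightly cleaner and yields the explicit constant $n/2$. There are no gaps.
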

\begin{proof}
We argue that for each $i$, $p |\mcA_i| |\mcC_i| \leq \max(|\mcA_i|,|\mcC_i|)$, which by Lemma~\ref{lem:time} implies running time $\bigo(n+\sum_i p |\mcA_i| |\mcC_i|) = \bigo(n+\sum_i \max(|\mcA_i|,|\mcC_i|)) = \bigo(n)$.
Suppose first that $|\mcA_i||\mcC_i| \geq k^2$. Then $p = k/(|\mcA_j||\mcC_j|)$ and $p |\mcA_i| |\mcC_i| \leq k \leq \sqrt{|\mcA_i| |\mcC_i|} \leq \max(|\mcA_i|,|\mcC_i|)$.
Otherwise, when $|\mcA_i||\mcC_i| < k^2$, we have $p \leq 1/k$ and $p |\mcA_i| |\mcC_i| = |\mcA_i| |\mcC_i| / k \leq \max(|\mcA_i|,|\mcC_i|)$. \qed
\end{proof}

We note that when $R_1$ and $R_2$ are sorted according to $b$, the value of $p$ specified above can be found by a simple scan over both inputs.
Our experiments indicate that in practice this initial scan is not needed, see Section~\ref{sec:experiments} for details.

\subsection{Error probability}\label{sec:bar-yossef}
\begin{theorem}\label{theorem:kmin}
Let $h$ be a pairwise independent hash function.
Suppose we are provided with a stream of elements $N$ with $h(x) < v$ for all $x\in N$.
Further, let~$\varepsilon$, $0< \varepsilon < \tfrac{1}{4}$ be given and assume that $p \geq \min \left( \frac{k}{2z}, \frac{1}{k} \right)$, where $k\geq 9/\varepsilon^2$, and $z$ is the number of distinct items in $N$.
Then Algorithm~\ref{code:pair} produces an approximation $\tilde{z}$ of $z$ such that
\begin{itemize}
	\item $\Pr[(1-\varepsilon)z < \tilde{z} < (1+\varepsilon)z] \geq 2/3$ for $z>k^2$, and
	\item $\Pr[\tilde{z} < (1+\varepsilon) k^2] \geq 2/3$ for $z\leq k^2$.
\end{itemize}
\end{theorem}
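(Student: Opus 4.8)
The plan is to follow the $k$-th-minimum-value analysis of Bar-Yossef et al.\ and reduce the accuracy guarantee to two tail bounds on a sum of pairwise independent indicators. Recall the estimator is $\tilde z = k/v$, where $v$ is the $k$-th smallest value of $h$ among the $z$ distinct elements of $N$. For a threshold $t\in[0,1]$ let $Y_t = |\{x\in N : h(x) < t\}|$ count the distinct elements whose hash falls below $t$. Since $h$ is uniform on $[0,1]$ we have $\Pr[h(x)<t]=t$, so $\E[Y_t]=zt$, and since the summands are pairwise independent $0/1$ variables, $\Var[Y_t]\le\E[Y_t]=zt$. The key observation is the duality $v<t \iff Y_t\ge k$ and $v>t \iff Y_t<k$. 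Hence $\tilde z>(1+\varepsilon)z \iff Y_{t_1}\ge k$ with $t_1=k/((1+\varepsilon)z)$, and $\tilde z<(1-\varepsilon)z \iff Y_{t_2}<k$ with $t_2=k/((1-\varepsilon)z)$.

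First I would bound the overestimate. Here $\E[Y_{t_1}]=k/(1+\varepsilon)$, and the event $Y_{t_1}\ge k$ is a deviation of $k-k/(1+\varepsilon)=k\varepsilon/(1+\varepsilon)$ above the mean; Chebyshev's inequality with $\Var[Y_{t_1}]\le k/(1+\varepsilon)$ gives $\Pr[Y_{t_1}\ge k]\le (1+\varepsilon)/(k\varepsilon^2)$. Symmetrically $\E[Y_{t_2}]=k/(1-\varepsilon)$ and the underestimate event $Y_{t_2}<k$ is a deviation of $k\varepsilon/(1-\varepsilon)$ below the mean, so $\Pr[Y_{t_2}<k]\le (1-\varepsilon)/(k\varepsilon^2)$. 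Substituting $k\ge 9/\varepsilon^2$ bounds each probability by $(1\pm\varepsilon)/9$, and using $\varepsilon<\tfrac{1}{4}$ pushes both strictly below $1/6$; a union bound then yields $\Pr[(1-\varepsilon)z<\tilde z<(1+\varepsilon)z]>2/3$, establishing the first item whenever $z>k^2$.

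Next I would treat the truncation at the threshold $p$ and the small-$z$ regime. The algorithm only inspects elements with $h$-value below $p$, so it returns the exact estimate $\tilde z=k/v$ precisely when at least $k$ distinct elements lie below $p$ (i.e.\ $Y_p\ge k$), and otherwise falls back to $\tilde z=k^2$. For $z>k^2$ I would use the hypothesis $p\ge\min(k/(2z),1/k)=k/(2z)$ to argue that, on the good event above, the $k$-th smallest value $v<t_2$ does lie below $p$, so a valid estimate is produced; the dynamic lowering of $p$ during execution is harmless by the robustness remark preceding Algorithm~\ref{code:pair}. For $z\le k^2$ the claim is one-sided: if fewer than $k$ elements fall below $p$ the algorithm outputs $\tilde z=k^2<(1+\varepsilon)k^2$ deterministically, while if it does return $k/v$ the bad event $\tilde z\ge(1+\varepsilon)k^2$ is exactly $Y_{t_3}\ge k$ with $t_3=1/((1+\varepsilon)k)$; since $z\le k^2$ gives $\E[Y_{t_3}]=z\,t_3\le k/(1+\varepsilon)$, the identical Chebyshev computation bounds this by $(1+\varepsilon)/(k\varepsilon^2)<1/3$, proving the second item.

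The routine part is the two moment estimates and the constant chase that forces both tails below $1/6$. The main obstacle I anticipate is the interaction between the efficiency threshold $p$ and the \emph{validity} of the sketch: lowering $p$ to keep the running time $\bigo(n)$ (Lemma~\ref{lemma:p_acmax}) simultaneously risks inspecting fewer than $k$ elements, and one must check that the assumed lower bound on $p$ still leaves the true $k$-th smallest value below $p$ with the required probability in the case $z>k^2$, so that the accurate value $k/v$ is actually reported rather than the fallback $k^2$. Making this compatibility precise --- rather than the Chebyshev bounds themselves --- is where the care is needed.
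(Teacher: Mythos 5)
Your two Chebyshev bounds for the over- and under-estimate events are exactly the paper's argument (the paper phrases them with indicator variables $X_{(a,c)}$ summed to $Y$ and $Y'$ rather than your level sets $Y_t$, but the computation is identical), and your one-sided treatment of the case $z\le k^2$ is sound. The genuine gap is precisely in the step you flag as delicate and then dispose of incorrectly: for $z>k^2$ you claim that, on the good event, the $k$-th smallest hash value $v<t_2=k/((1-\varepsilon)z)$ ``does lie below $p$''. This does not follow from the hypothesis $p\ge k/(2z)$. Since $t_2>k/z>k/(2z)$, it is entirely consistent that $k/(2z)\le p<t_2$, in which case $Y_{t_2}\ge k$ says nothing about whether $k$ elements hash below $p$; the algorithm may then still return the fallback $\tilde z=k^2$, which is a gross underestimate when $z>k^2$. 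There is no deterministic implication available here: ``fewer than $k$ hash values below $p$'' has to be handled as a third \emph{random} bad event. That is what the paper does --- it introduces a third set of indicators $X''_{(a,c)}=[h(a,c)<p]$, so that $Y''$ has mean $zp$, and applies Chebyshev a third time to bound $\Pr[Y''<k]$, using $z>k^2$ and $k\ge 9/\varepsilon^2\ge 144$ to make the resulting bound at most $1/18$.

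This omission also breaks your probability accounting: you spend $1/6+1/6=1/3$ on the two estimation events, leaving no budget for the third. The paper's constants are arranged as $1/6+1/9+1/18=1/3$ exactly to accommodate it. (Your sharper bounds $(1+\varepsilon)/9$ and $(1-\varepsilon)/9$ do leave some slack, but you would still have to carry out the third tail bound on $Y_p$ and verify that $zp$ sits far enough above $k$ for Chebyshev to bite --- this is the genuinely delicate point of the whole proof, not a formality.) As written, the proposal does not establish the first bullet; to repair it, replace the deterministic claim by the paper's third Chebyshev bound and redo the union bound.
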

\begin{proof}
The error probability proof is similar to the one that can be found in~\cite{baryossef02counting}, with some differences and extensions.
We bound the error probability of three cases: the estimate being smaller/larger than the multiplicative error bound, and the number of obtained samples being too small.

\emph{Estimate too large}.\quad Let us first consider the case where $\tilde{z} > (1+\varepsilon)z$, i.e.~the algorithm overestimates the number of distinct elements.
This happens if the stream $N$ contains at least $k$ entries smaller than $k/(1+\varepsilon)z$.
For each pair $(a,c)\in Z$ define an indicator random variable $\Xac$ as
$$
\Xac = \begin{cases}1 & h(a,c) < k/(1+\varepsilon)z\\
     0 & \text{otherwise}
\end{cases}
$$
That is, we have $z$ such random variables for which the probability of $\Xac = 1$ is exactly $k/(1+\varepsilon)z$
\ and $\E[\Xac] = k/(1+\varepsilon)z$.
Now define $Y=\sum_{(a,c)\in Z}\Xac$ so that $\E[Y] = \E[\sum_{(a,c)\in Z}\Xac] = \sum_{(a,c)\in Z} \E[\Xac]=k/(1+\varepsilon)$.
By the pairwise independence of the $\Xac$ we also get $\Var(Y) \leq k/(1+\varepsilon)$.
Using Chebyshev's inequality~\cite{books/cu/MotwaniR95} we can bound the probability of having too many pairs reported:
$$
 \Pr\left[Y > k\right] \leq \Pr\left[|Y-\E[Y]| > k-\tfrac{k}{1+\varepsilon}\right]
  \leq \frac{\Var[Y]}{\left(k-\tfrac{k}{1+\varepsilon}\right)^2}
   \leq \frac{k/(1+\varepsilon)}{\left(k-\tfrac{k}{1+\varepsilon}\right)^2} \leq \tfrac{1}{6}
$$
since $k\geq 9/\varepsilon^2$.

\emph{Estimate too small}.\quad Now, consider the case where $\tilde{z} < (1-\varepsilon)z$ which happens when at most $k$ hash values are smaller than $k/(1-\varepsilon) z$ and at least $k$ hash values are smaller than $p$.
Define $\Xac'$ as 
$$
\Xac' = \begin{cases}1 & h(a,c) < k/(1-\varepsilon)z\\
     0 & \text{otherwise}
\end{cases}
$$
so that $\E[\Xac'] = k/(1-\varepsilon) z < (1+\varepsilon) k /z$.
Moreover, with $Y'=\sum_{(a,c)\in Z} \Xac'$ we have $\E[Y'] = k/(1-\varepsilon)$, and since the indicator random variables defined above are pairwise independent, we also have $\Var[Y'] \leq \E[Y'] < (1+\varepsilon) k$.
Chebyshev's inequality gives:
$$
 \Pr\left[Y' < k\right] \leq \Pr\left[|Y'-\E[Y']| > \tfrac{k}{1-\varepsilon}-k\right]
 \leq \frac{\Var[Y']}{\left(k-\tfrac{k}{1+\varepsilon}\right)^2}
 \leq \frac{(1+\varepsilon)k}{\left(\tfrac{k}{1-\varepsilon}-k\right)^2} < \tfrac{1}{9}
$$
since $k\geq 9/\varepsilon^2$.

\emph{Not enough samples.}\quad Consider the case where $|S|<k$ after all pairs have been retrieved.
In this case the algorithm returns $\beta = k^2$ as an upper bound on the number of distinct elements in the output, and we have two possible situations:
either there is actually less than $k^2$ distinct pairs in the output, in which case the algorithm is correct, or there are more than $k^2$ distinct elements in the output, in which case it is incorrect.
In the latter case, less than $k$ hash values have been smaller than $p$ and the $k$th smallest value $v$ is therefore larger than $p$.
Define $\Xac''$ as 
$$
\Xac'' = \begin{cases}1 & h(a,c) < p\\
     0 & \text{otherwise}
\end{cases}
$$
and let again $Y''=\sum_{(a,c)\in Z} \Xac''$.
It results that $\E[\Xac'']=p$ and $\E[Y''] = zp$, and because of pairwise independancy of $\Xac''$, also $\Var[Y''] \leq \E[Y'']$.
Using Chebyshev's inequality and remembering that $z>k^2$ in this case we have:
$$
\Pr[Y'' < k] \leq \Pr[|Y''-\E[Y'']| > zp-k] 
 \leq \frac{zp}{(zp-k)^2} \leq \frac{zp}{\left( \frac{1}{2}zp \right)^2} \leq 8/k \leq 1/18.
$$
using that $k \geq 9/\varepsilon^2 \geq 144$.

In conclusion, the probability that the algorithm fails to output an estimate within the given limits is at most $1/6 + 1/9 + 1/18 = 1/3$. \qed
\end{proof}

For the proof of Theorem~\ref{thm:const-error} we observe that in the above proof, if $\varepsilon$ is constant the error probability is $\bigo(1/k)$. Using $k=\sqrt{n}$ we get linear running time and error probability $\bigo(1/\sqrt{n})$.

\subsubsection{Realization of hash functions.}
We have used the idealized assumption that hash values were real numbers in $(0;1)$.
Let $m=n^3$.
To get an actual implementation we approximate (by rounding down) the real numbers used by rational numbers of the form $i/m$, for integer $i$.
This changes each hash value by at most $2/m$.
Now, because of the way hash values are computed, the probability that we get a different result when comparing two real-valued hash values and two rational ones is bounded by $2/m$.
Similarly, the probability that we get a different result when looking up a hash value in the dictionary is bounded by $2k/m$.
Thus, the probability that the algorithm makes a different decision based on the approximation, in any of its steps, is $\bigo(kn/m) = o(1)$.
Also, for the final output the error introduced by rounding is negligible.

\section{Distinct sketches}\label{sec:sketch}

A well-known approach to size estimation in, described in generality by Gibbons~\cite{Gibbons:2001:DSH} and explicitly for join-project operations in~\cite{conf/isaac/GangulyS06,amossenpagh09joinproject}, is to sample random subsets $R'_1\subseteq R_1$ and $R'_2\subseteq R_2$, compute $Z'=\pi_{ac}(R'_1\Join R'_2)$, and use the size of $Z'$ to derive an estimate for $z$.
This is possible if $R'_1=\sigma_{a\in S_a}(R_1)$, where $S_a \subseteq \pi_a(R_1)$ is a random subset where each element is picked independently with probability $p_1$, and similarly $R'_2=\sigma_{c\in S_c}(R_2)$, where $S_c\subseteq \pi_c(R_2)$ includes each element independently with probability $p_2$.
Then $z'=|Z'|/(p_1 p_2)$ is an unbiased estimator for $z$.
The samples can be obtained in small space using hash functions whose values determine which elements are picked for $S_a$ and $S_c$.
The value $|Z'|$ can be approximated in linear time using the method described in section~\ref{sec:algo} if the samples are sorted --- otherwise one has to add the cost of sorting.
In either case, the estimation algorithm is I/O-efficient.

Below we analyze the variance of the estimator $z'$, to identify the minimum sampling probability that introduces only a small relative error with good probability.
The usual technique of repetition can be used to reduce the error probability.
Recall that we have two relations with $n_1$ and $n_2$ tuples, respectively, and that $n_a$ and $n_c$ denotes the number of distinct values of attributes $a$ and $c$, respectively.
Our method will pick samples $R'_1$ and $R'_2$ of expected size $s$ from each relation, where $s=p_1 n_1=p_2 n_2$ is a parameter to be specified.

\begin{theorem}\label{thm:sketch}
Let $R'_1$ and $R'_2$ be samples of size $s$, obtained as described above.
Then $z'=|\pi_{ac}(R'_1\Join R'_2)|/(p_1 p_2)$ is a $1\pm\varepsilon$ approximation of $z=|\pi_{ac}(R_1\Join R_2)|$ with probability $5/6$ if $z>\beta$, where $\beta = \frac{14}{\varepsilon^2}\left(\frac{n_c n_1 + n_a n_2}{s} \right)$.
If $z\leq\beta$ then $z'<(1+\varepsilon)\beta$ with probability $5/6$.
\end{theorem}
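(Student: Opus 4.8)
The plan is to analyze $z'=|Z'|/(p_1p_2)$ as a sum of indicator variables and apply Chebyshev's inequality, in the same spirit as the error analysis of Theorem~\ref{theorem:kmin}, except that the randomness now comes from the sampling of $S_a$ and $S_c$ rather than from a hash function.

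First I would pin down which pairs survive sampling. A pair $(a,c)\in Z$ lies in $Z'=\pi_{ac}(R'_1\Join R'_2)$ if and only if $a\in S_a$ and $c\in S_c$, since a witnessing value of $b$ is retained precisely when both of its endpoints are sampled. Writing $\alpha_a$ and $\gamma_c$ for the independent indicators of the events $a\in S_a$ and $c\in S_c$, we get $|Z'|=\sum_{(a,c)\in Z}\alpha_a\gamma_c$ with $\E[\alpha_a\gamma_c]=p_1p_2$, so $\E[z']=z$, recovering the stated unbiasedness.

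Second --- the technical heart --- I would bound $\Var(z')=\Var(|Z'|)/(p_1p_2)^2$. Expanding $\E[|Z'|^2]$ over ordered pairs of terms, the key observation is that $\alpha_a\gamma_c$ and $\alpha_{a'}\gamma_{c'}$ are uncorrelated unless they share a coordinate: when $a\neq a'$ and $c\neq c'$ all four sampling events are independent. Hence only the diagonal ($z$ terms, each of variance $\le p_1p_2$), the same-$a$ pairs (covariance $\le p_1p_2^2$ each) and the same-$c$ pairs (covariance $\le p_1^2p_2$ each) contribute. Letting $d_a$ be the number of distinct $c$ paired with $a$ in $Z$, the same-$a$ pairs number $\sum_a d_a(d_a-1)\le n_c z$ (since $d_a\le n_c$ and $\sum_a d_a=z$), and symmetrically the same-$c$ pairs total at most $n_a z$. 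Dividing by $(p_1p_2)^2$ and substituting $p_1=s/n_1$, $p_2=s/n_2$ yields
\[
\Var(z') \le \frac{z\,n_1 n_2}{s^2} + \frac{z\,(n_c n_1 + n_a n_2)}{s}.
\]

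Finally I would apply Chebyshev in each case. For $z>\beta$ we have $\Pr[|z'-z|>\varepsilon z]\le \Var(z')/(\varepsilon^2 z^2)$, and the second variance term contributes at most $\tfrac{1}{14}$ directly from the definition of $\beta$. For $z\le\beta$ the event $z'\ge(1+\varepsilon)\beta$ is contained in $|z'-z|\ge\varepsilon\beta$ (as $\E[z']=z\le\beta$), so Chebyshev with denominator $\varepsilon^2\beta^2$ again gives $\tfrac{1}{14}$ from the second term using $z\le\beta$. The main obstacle is the first variance term $z\,n_1n_2/s^2$, which is \emph{not} controlled by $\beta$ alone; the resolution is the trivial bound $z\le n_a n_c$ combined with the AM--GM inequality $n_c n_1 + n_a n_2\ge 2\sqrt{n_a n_c n_1 n_2}$, which gives $\beta^2\ge 784\,n_a n_c n_1 n_2/(\varepsilon^4 s^2)$. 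In the case $z\le\beta$ this shows $z\,n_1n_2/s^2\le n_a n_c n_1 n_2/s^2\le \varepsilon^4\beta^2/784$, so the term contributes only $\bigo(\varepsilon^2)$; in the case $z>\beta$ one combines $z>\beta$ with $z\le n_a n_c$ to derive the analogous $\bigo(\varepsilon^2)$ bound. Adding the contributions leaves the total failure probability below $\tfrac16$ with ample room, which is exactly what the claimed success probability $5/6$ requires.
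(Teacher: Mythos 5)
Your proposal is correct and follows essentially the same route as the paper: the same variance decomposition into diagonal, same-$a$, and same-$c$ contributions yielding $\Var(z')\leq z\,n_1n_2/s^2 + z(n_cn_1+n_an_2)/s$, followed by Chebyshev, with the $n_1n_2/s^2$ term tamed exactly as the paper does via $z\leq n_an_c$ and the AM--GM inequality. The only differences are cosmetic (raw indicators and covariance counting instead of the paper's centered variables $X_i,Y_j$, and bounding the two variance terms separately rather than folding them into one expression), and your explicit treatment of the $z\leq\beta$ case fills in a step the paper only sketches.
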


\subsection{Analysis of variance}

To arrive at a sufficient condition that $z'$ is a $1\pm\varepsilon$ approximation of $z$ with good probability, we analyze its variance. 
To this end define $Z_{i\cdot } = \{ j \, | \, (i,j)\in Z\}$, $Z_{\cdot j} = \{ i \, | \, (i,j)\in Z\}$, and let
$$
 X_i = \left\{ \begin{array}{ll}
 1-p_1, & \text{if }i\in S_a\\
 -p_1, & \text{otherwise}
 \end{array} \right.
\qquad
 Y_j = \left\{ \begin{array}{ll}
 1-p_2, & \text{if }j\in S_c\\
 -p_2, & \text{otherwise}
 \end{array} \right. .
$$
By definition of $S_a$, $\E[X_i] = \Pr[i\in S_a] (1-p_1) - \Pr[i\not\in S_a] p_1 = 0$. 
Similarly, $\E[Y_i]=0$. 
We have that $(i,j)\in Z'$ if and only if $(i,j)\in Z$ and $(i,j)\in S_a\times S_c$. 
This means that $z' p_1 p_2 = \sum_{(i,j)\in Z} (X_i + p_1)(Y_j + p_2)$.
By linearity of expectation, $\E[(X_i + p_1)(Y_j + p_2)] = p_1 p_2$, and we can write the variance of $z' p_1 p_2$, $\Var(z' p_1 p_2)$ as
\begin{displaymath}
\E\left[ \left(\sum_{(i,j)\in Z} \left((X_i + p_1)(Y_j + p_2) - p_1 p_2\right) \right)^2 \right].
\end{displaymath}
Expanding the product and using linearity of expectation, we get
\begin{alignat*}{1}
\Var(z' p_1 p_2) & = \sum_{(i,j)\in Z} \sum_{(i,j')\in Z} \E\left[ X_i^2 p_2^2 \right] 
    + \sum_{(i,j)\in Z} \sum_{(i',j)\in Z} \E\left[ Y_j^2 p_1^2 \right] 
    + \sum_{(i,j)\in Z} \E\left[ X_i^2 Y_j^2 \right]\\
    & = \sum_{i\in \mcA}\sum_{j,j'\in Z_{i\cdot }} p_2^2\, \E\left[ X_i^2 \right]
     + \sum_{j\in \mcC}\sum_{i,i'\in Z_{\cdot j}} p_1^2\, \E\left[ Y_i^2 \right]
     + z\, \E\left[ X_i^2 \right] \E\left[ Y_i^2 \right]
\end{alignat*}
Since $\E\left[ X_i^2 \right] = p_1 (1-p_1)^2  + (1-p_1) (-p_1)^2 = p_1 - p_1^2 < p_1$, and similarly $\E\left[ Y_j^2 \right] < p_2$ we can upper bound $\Var(z')$ as follows:
\begin{alignat*}{1}
\Var(z') & = (p_1 p_2)^{-2} \,\Var(z' p_1 p_2)\\
 			   & < (p_1 p_2)^{-2} 
				  \Big( \sum_{i\in \mcA}\sum_{j,j'\in Z_{i\cdot }} p_1 p_2^2 + \sum_{j\in \mcC}\sum_{i,i'\in Z_{\cdot j}} p_1^2 p_2 + z\, p_1 p_2 \Big)\\
 			   & \leq (p_1 p_2)^{-2} 
				  \left( n_c z\, p_1 p_2^2 + n_a  z\, p_1^2 p_2 + z\, p_1 p_2 \right)\\
			   & = \left(n_c/p_1 + n_a /p_2 + (p_1 p_2)^{-1}\right) \, z \enspace .
\end{alignat*}

\subsection{Sufficient sample size}

We are ready to derive a bound on the probability that~$z'$ deviates significantly from $z$.
Choose $0<\varepsilon < 1$. 
Since $z = \E[z']$ Chebyshev's inequality says
$$
\Pr[|z'-z]>\varepsilon z] < \frac{\Var(z')}{(\varepsilon z)^2} \leq \left(n_c/p_1 + n_a /p_2 + (p_1 p_2)^{-1}\right) / (\varepsilon^2 z).
$$
This can equivalently be expressed in terms of the sample size $s$, since $p_1 = s / n_1$ and $p_2 = s / n_2$:
$$
\Pr[|z'-z]>\varepsilon z] < \left(n_c n_1 + n_a n_2 + n_1 n_2 / s\right) / (s \varepsilon^2 z).
$$
We seek a sufficient condition on $s$ that the above probability is bounded by some  constant $\delta < \tfrac{1}{2}$ (e.g.~$\delta = 1/6$).
In particular it must be the case that 
$n_1 n_2 / (s^2\varepsilon^2 z) < \delta$, which implies $s > \sqrt{n_1,n_2 / (\delta z)} \geq \sqrt{n_1,n_2 / (\delta n_a  n_c)}$.
Hence, using the arithmetic-geometric inequality:
$$
n_1 n_2 / s < \sqrt{n_c n_1 n_a  n_2 \delta} \leq (n_c n_1 + n_a  n_2) / (2\sqrt{\delta}).
$$
In other words, it suffices that
\begin{gather*}
\frac{\left(n_c n_1 + n_a n_2\right) (1+(2\sqrt{\delta})^{-1})}{s \varepsilon^2 z} < \delta
\iff s > \Bigg( \frac{n_c n_1 + n_a n_2}{z} \Bigg) 
\Bigg( \frac{1+(2\sqrt{\delta})^{-1}}{\varepsilon^2 \delta}\Bigg).
\end{gather*}

One apparent problem is the chicken-egg situation: $z$ is not known in advance.
If a lower bound on $z$ is known, this can be used to compute a sufficient sample size.
Alternatively, if we allow a larger relative error whenever $z\leq \beta$ we may compute a sufficient value of $s$ based on the assumption $z \geq \beta$.
Whenever $z < \beta$ we then get the guarantee that $z' < (1+\varepsilon) \beta$ with probability $1-\delta$.
Theorem~\ref{thm:sketch} follows by fixing $s$ and solving for $\beta$.

\subsubsection{Optimality.}
For constant $\varepsilon$ and $\delta$ our upper bound matches the lower bound of Ganguly et al.~\cite{ganguly05aggregateestimation} whenever this does not exceed $n_1+n_2$. It is trivial to achieve a sketch of size $\bigo((n_1+n_2)\log (n_1+n_2))$ bits (simply store hash signatures for the entire relations).
We also note that the lower bound proof in~\cite{ganguly05aggregateestimation} uses certain restrictions of parameters ($n_1=n_2$, $n_a=n_c$, and $z < n_a + n_c$), so it may be possible to do better in some settings.

\section{Experiments}\label{sec:experiments}

\begin{figure}[p]
\centering
\subfigure[$k=256$]{\includegraphics[width=.45\textwidth]{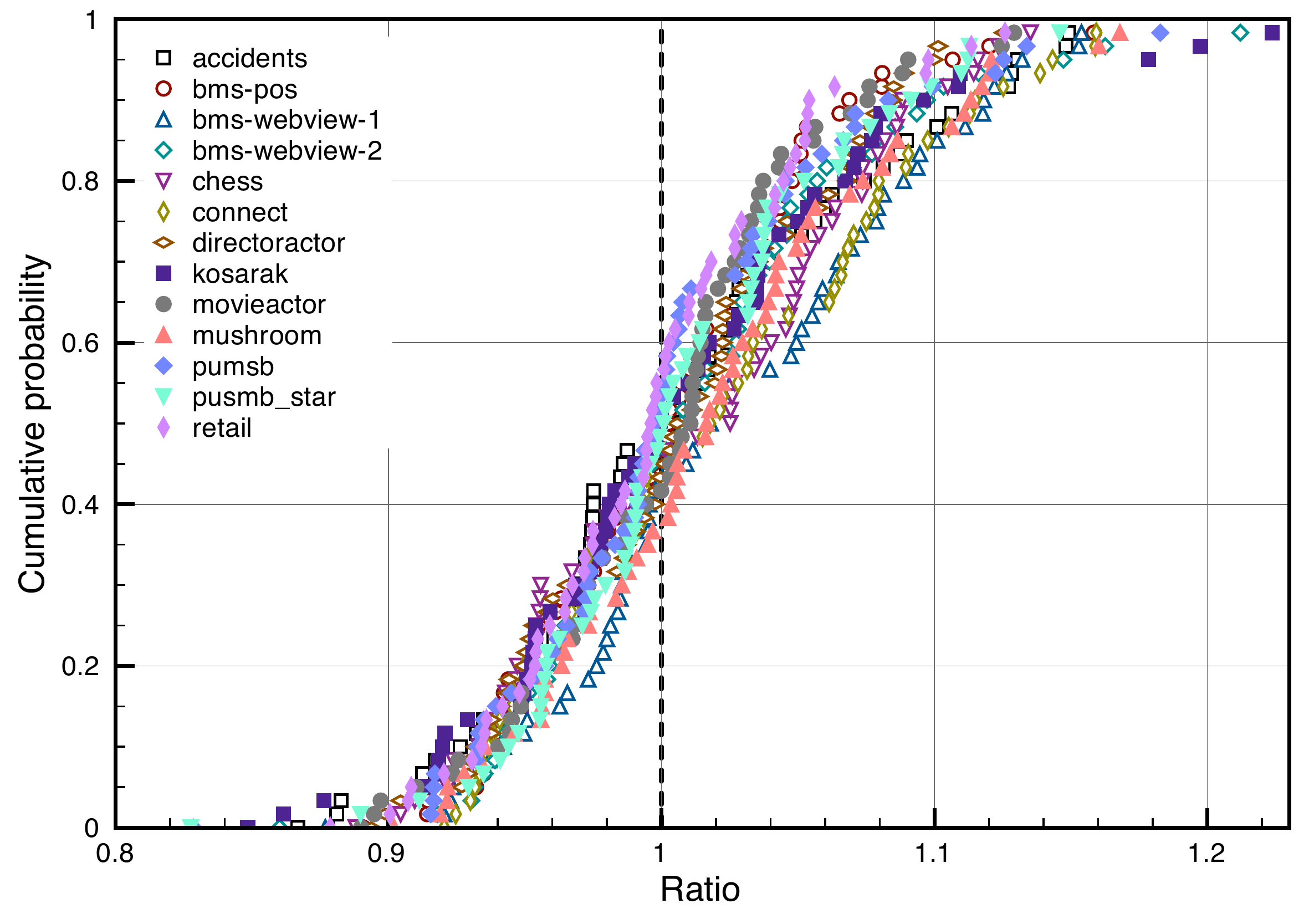}}
\hspace{5mm}
\subfigure[$k=1024$]{\includegraphics[width=.45\textwidth]{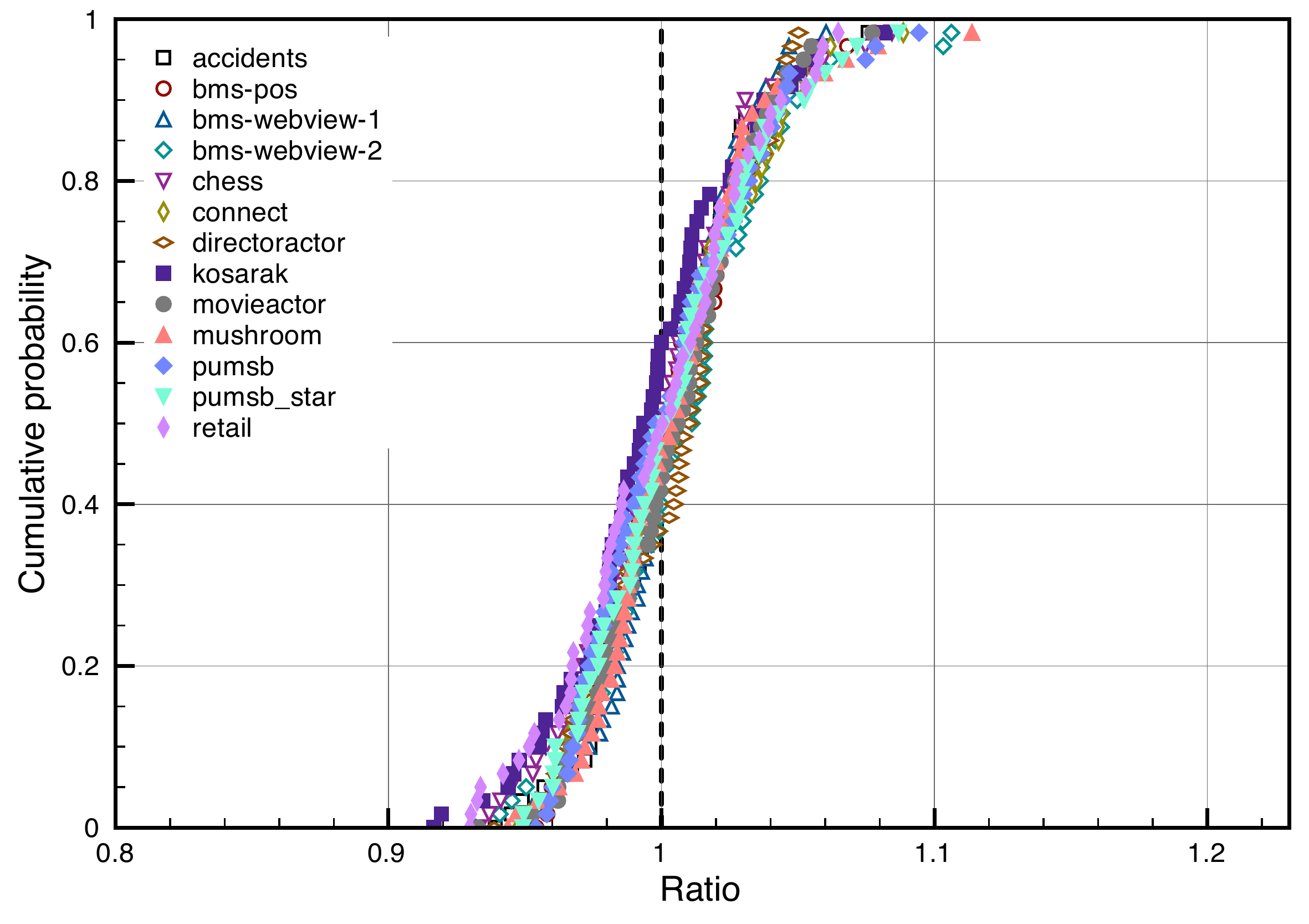}}
\caption{ The cumulative distribution functions for $k=256$ and $k=1024$.
It is seen that $k=1024$ yields a more precise estimate than $k=256$ with 2/3 of the estimates being within $4\%$ and $10\%$ of the exact size, respectively.}\label{fig:k256}
\end{figure}

\begin{figure}[p]
\centering
\subfigure[$k=1024$, $p_1=p_2=0.1$]{\includegraphics[width=.45\textwidth]{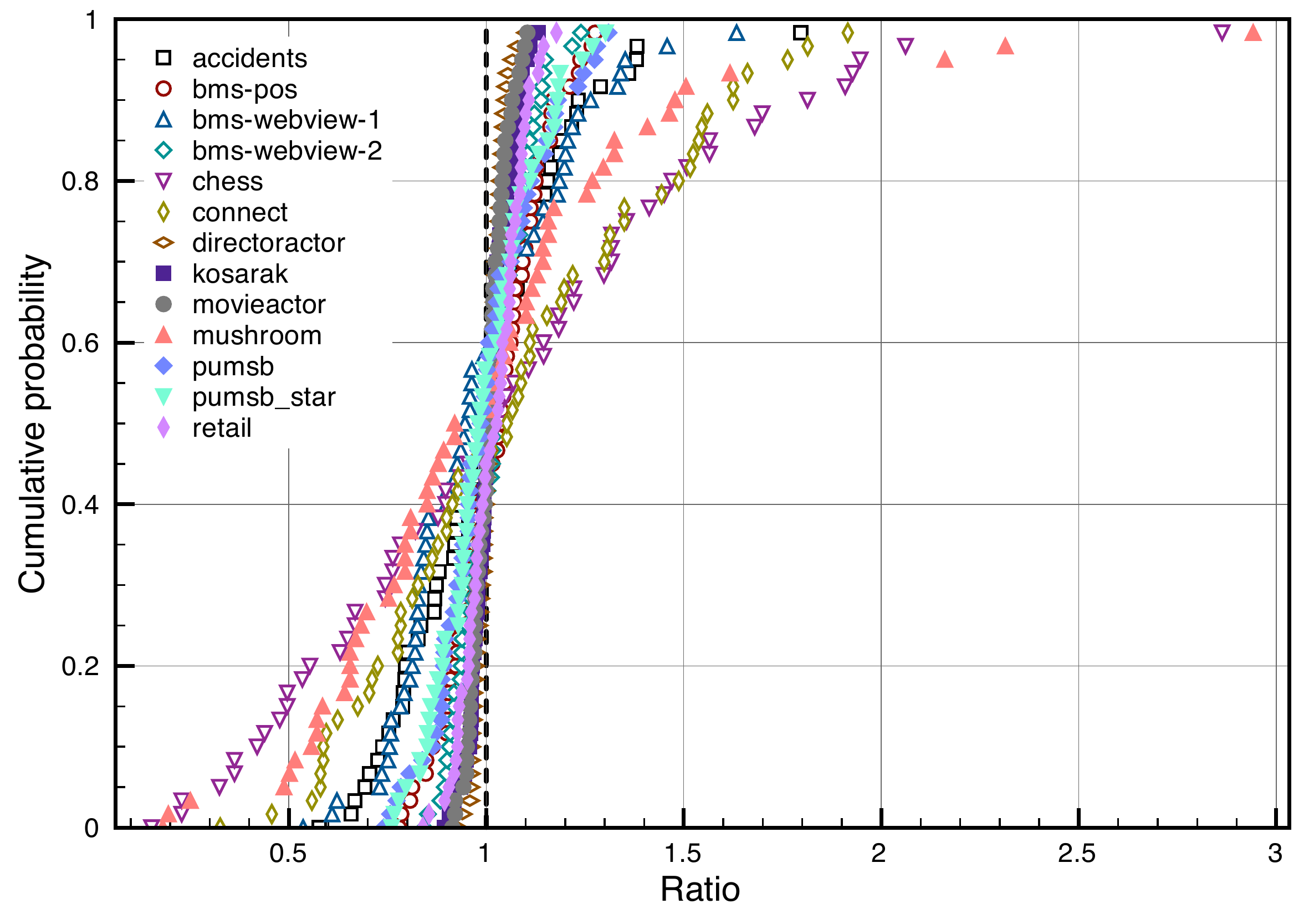}}
\hspace{5mm}
\subfigure[$k=1024$, $p_1=p_2=0.01$]{\includegraphics[width=.45\textwidth]{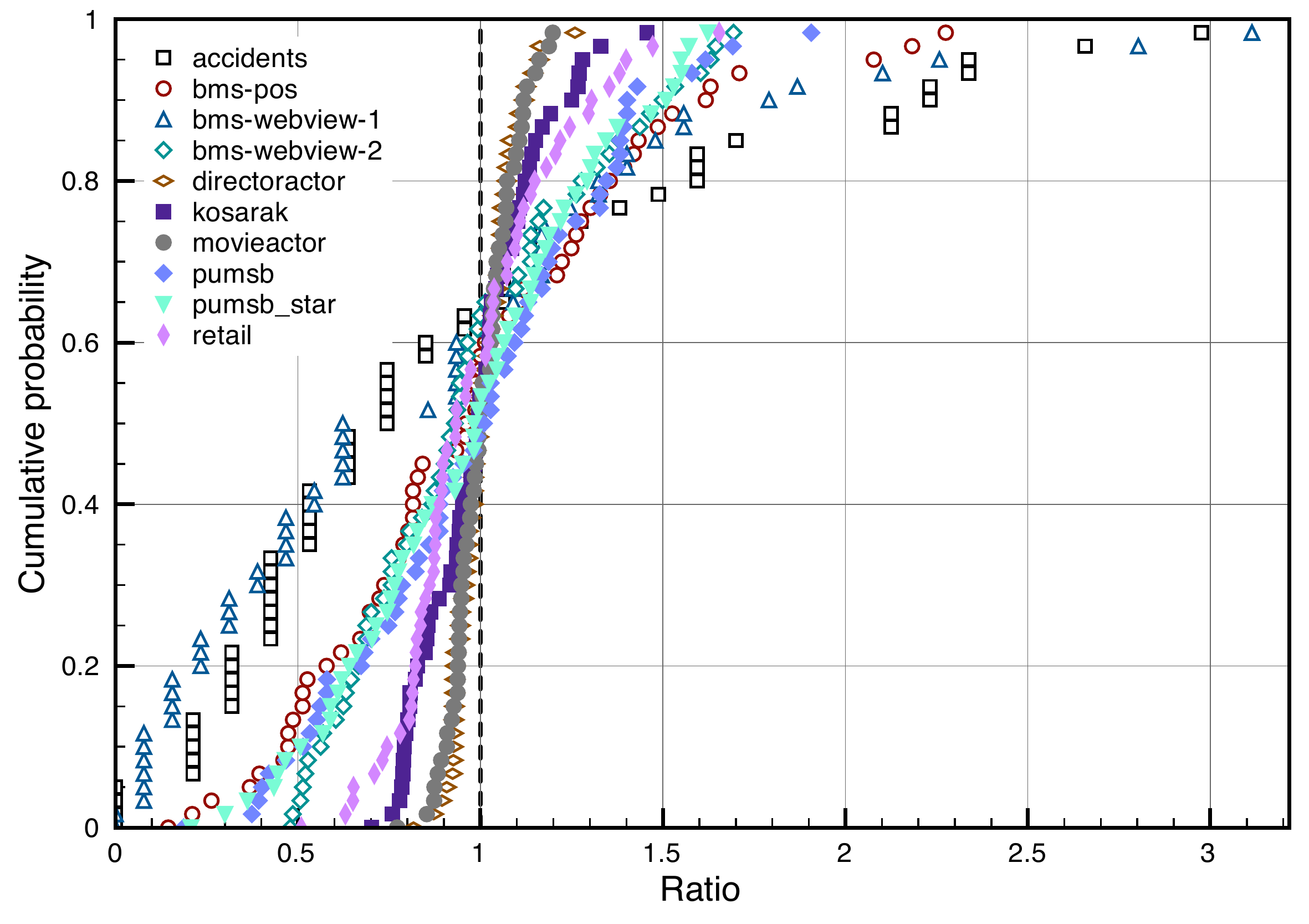}}
\caption{ Plots for sampling with probability 10\% and 1\%.
If the sampling probability is too small, no elements at all may reach the sketch and in these cases we are not able to return an estimate.
Instances with no estimates have been left out of the graph.
}\label{fig:sampling}
\end{figure}

We have run our algorithm on most of the datasets from the Frequent Itemset Mining Implementations (FIMI) Repository\footnote{http://fimi.cs.helsinki.fi} together with some datasets extracted from the Internet Movie Database (IMDB).
Each dataset represents a single relation, and motivated by the Apriori space estimation example in the introduction, we perform the size estimation on self-joins of these relations.
Table~\ref{table:datasets} displays the size of each dataset together with the number of distinct $a$- and $c$-values.

\begin{table}[!h]
\centering
\small
\begin{tabular}{l|cc|cc}
Instance      & $z$             & $n_a\,(= n_c)$  & $\varepsilon_{0.1}$  & $\varepsilon_{0.01}$\\
\hline
accidents     &   $94\cdot10^3$ & 468             & 1.18           & 3.73\\
bms-pos       &  $760\cdot10^3$ & 1,657           & 0.78           & 2.47 \\
bms-webview-1 &  $128\cdot10^3$ & 497             & 1.04           & 3.29\\
bms-webview-2 & $1.45\cdot10^6$ & 3,340           & 0.80           & 2.54\\
chess         & $5.24\cdot10^3$ & 75              & 2.00           & 6.33\\
connect       & $13.8\cdot10^3$ & 129             & 1.62           & 5.12\\
directoractor &  $734\cdot10^6$ & 50,645          & 0.14           & 0.44\\
kosarak       & $66.2\cdot10^6$ & 41,270          & 0.42           & 1.32\\
movieactor    &  $111\cdot10^6$ & 51,226          & 0.36           & 1.14\\
mushroom      & $7.17\cdot10^3$ & 119             & 2.16           & 6.82\\
pumsb         & $1.07\cdot10^6$ & 2,113           & 0.74           & 2.35\\
pumsb\_star   &  $967\cdot10^3$ & 2,088           & 0.78           & 2.46\\
retail        & $7.19\cdot10^6$ & 16,470          & 0.80           & 2.53\\
\end{tabular}
\medskip
\caption{ Characteristics of the used datasets. The rightmost middle column displays the size $n_a = |\pi_{a}(R_1)|$ (which in this case is equals $n_c = |\cup \pi_c(R_2)|$).
The two rightmost columns display the theoretical error as described in Theorem~\ref{thm:sketch}, for $p_1=p_2=0.1$ and $p_1=p_2=0.01$, respectively.
These theoretical error bounds, which hold with probability $5/6$, are significantly larger than the actual observed errors in Figure~\ref{fig:sampling}.
}\label{table:datasets}
\end{table}

Rather than selecting $h_1$ and $h_2$ from an arbitrary pairwise independent family, we store functions that map the attribute values to fully random and independent values of the form $d/2^{64}$, where $d$ is a 64 bit random integer formed by reading 64 random bits from the Marsaglia Random Number CDROM\footnote{http://www.stat.fsu.edu/pub/diehard/}. 

We have chosen an initial value of $p=1$ for our tests in order to be certain to always arrive at an estimate.
In most cases we observed that $p$ quickly decreases to a value below $1/k$ anyway.
But as the sampling probability decreases, the probability that the sketch will never be filled increases, implying that we will not get a linear time complexity with an initial value of $p=1$.
In the cases where the sketch is not filled, we report $|F|/(p_1 p_2)$ as the estimate, where $|F|$ is the number of elements in the buffer.

Tests have been performed for $k = 256$ and $k=1024$.
In each test, 60 independent estimates were made and compared to the exact size of the join-project.
By sorting the ratios ``estimate''/''exact size'' we can draw the cumulative distribution function for each instance that, for each ratio-value on the $x$-axis, displays on the $y$-axis the probability that an estimate will have this ratio or less.
Figure~\ref{fig:k256} shows plots for $k=256$ and $k=1024$.
In Table~\ref{table:error1} we compare the theoretical error $\varepsilon$ with observed error for 2/3 of the results.
As seen, the observed error is smaller than the theoretical upper bound.

In Figure~\ref{fig:sampling} we perform sampling with 10\% and 1\% probability, as described in Section~\ref{sec:sketch}.
Again, the samples are chosen using truly random bits.
The variance of estimates increase as the probability decreases, but increases more for smaller than for larger instances.
If the sampling probability is too small, no elements at all may reach the sketch and in these cases we are not able to return an estimate.
As seen, the observed errors in the figure are significantly smaller than the theoretical errors seen in Table~\ref{table:datasets}.

\begin{table}[!h]
\centering
\begin{tabular}{l|cc}
$k$      & $\epsilon$ & Observed $\varepsilon$\\
\hline
256      & 0.188 & 0.1 \\
1024     & 0.094 & 0.04 \\
\end{tabular}
\medskip
\caption{ The theoretical error bound is $\varepsilon = \sqrt{9/k}$ as stated Theorem~\ref{theorem:kmin}.
The observed error in Figure~\ref{fig:k256}, however, is significantly
less.}\label{table:error1}
\end{table}


\medskip

\section{Conclusion}
We have presented improved algorithms for estimating the size of boolean matrix products, for the first time allowing $o(1)$ relative error to be achieved in linear time. 
An interesting open problem is if this can be extended to transitive closure in general graphs, and/or to products of more than two matrices.

\medskip

{\bf Acknowledgement.} We would like to thank Jelani Nelson for useful discussions, and in particular for introducing us to the idea of buffering to achieve faster data stream algorithms. Also, we thank Sumit Ganguly for clarifying the lower bound proof of~\cite{ganguly05aggregateestimation} to us. Finally, we thank Konstantin Kutzkov and Rolf Fagerberg for pointing out mistakes that have been corrected in this version of the paper.






\bibliographystyle{abbrv} 
\bibliography{bibliography}
\end{document}